\title{New descriptions of the weighted Reed-Muller codes and the homogeneous Reed-Muller codes}
\author{Harinaivo ANDRIATAHINY\\Mention: Mathematics and Computer Science,\\Domain: Sciences and Technologies,\\University of Antananarivo, Madagascar\\e-mail: hariandriatahiny@gmail.com\\\\Vololona Harinoro RAKOTOMALALA\\Mention: Meteorology,\\Domain: Sciences of the Engineer,\\Higher Polytechnic School of Antananarivo (ESPA),\\University of Antananarivo, Madagascar\\e-mail: volhrakoto@gmail.com}
\theoremstyle{plain}
\newtheorem{thm}{Theorem}[section]
\newtheorem{prop}[thm]{Proposition}
\newtheorem{cor}[thm]{Corollary}
\theoremstyle{definition}
\newtheorem{rem}[thm]{Remark}
\DeclareMathOperator{\card}{card}
\begin{document}

\maketitle

\begin{abstract}
We give a description of the weighted Reed-Muller codes over a prime field in a modular algebra. A description of the homogeneous Reed-Muller codes in the same ambient space is presented for the binary case. A decoding procedure using the Landrock-Manz method is developed.
\end{abstract}
Keywords: weighted Reed-Muller codes, homogeneous Reed-Muller codes, modular algebra, Jennings basis, decoding.\\
MSC 2010: 94B05, 94B35, 12E05.
\section{Introduction}
It is well known that the Generalized Reed-Muller (GRM) codes of length $p^m$ over the prime field $\mathbb{F}_{p}$ can be viewed as the radical powers of the modular algebra $A=\mathbb{F}_{p}[X_{0},\ldots ,X_{m-1}]/(X_{0}^{p}-1,\ldots ,X_{m-1}^{p}-1)$ ([1],[4],[5]). $A$ is isomorphic to the group algebra $\mathbb{F}_{p}[\mathbb{F}_{p^m}]$.\\
The weighted Reed-Muller codes and the homogeneous Reed-Muller codes are classes of codes in the Reed-Muller family. The Jennings basis are used to describe the GRM codes over $\mathbb{F}_{p}$. We utilize the elements of the Jennings basis for the description of the weighted Reed-Muller codes and the homogeneous Reed-Muller codes in $A$. P. Landrock and O. Manz developed a decoding algorithm for the binary Reed-Muller codes in [9]. We use here the same method for the binary homogeneous Reed-Muller codes.\\
The weighted Reed-Muller codes can be considered as a generalization of the GRM codes. Some classes of the weighted Reed-Muller codes are algebraic-geometric codes.\\
The homogeneous Reed-Muller codes are subcodes of the GRM codes. In general, they have a much better minimum distance than the GRM codes.\\
We give, in section 2, the definition and some properties of the weighted Reed-Muller codes. We consider here the affine case. In section 3, a description of the weighted Reed-Muller codes over $\mathbb{F}_{p}$ in the quotient ring $A$ is presented. In section 4, we recall the definition and the parameters of the homogeneous Reed-Muller codes. In section 5, we describe the homogeneous Reed-Muller codes over the two elements field $\mathbb{F}_{2}$ in $A$ (with $p=2$). In section 6, we use the Landrock-Manz method to construct a decoding procedure for the homogeneous Reed-Muller codes in the binary case. In section 7, an example is given.

\section{Weighted Reed-Muller codes}
The definition and the properties of the weighted Reed-Muller codes presented in this section are from [11].
Let $\mathbb{F}_{q}$ the field of $q=p^r$ elements where $p$ is a prime number and $r\geq 1$ is an integer.
Let $(\mathbb{F}_{q})^{m}$ be the $m$-dimensional affine space defined over $\mathbb{F}_{q}$. $\mathbb{F}_{q}[Y_{0},Y_{1},\ldots ,Y_{m-1}]$ is the ring of polynomials in $m$ variables with coefficients in $\mathbb{F}_{q}$. If we attach to each variables $Y_{i}$ a natural number $w_{i}$, called weight of $Y_{i}$, we speak about the ring of weighted polynomials, $W\mathbb{F}_{q}[Y_{0},Y_{1},\ldots ,Y_{m-1}]$. The weighted degree of $F\in W\mathbb{F}_{q}[Y_{0},Y_{1},\ldots ,Y_{m-1}]$, is defined as
\begin{equation*}
\deg_{\varpi}(F)=\deg_{\varpi}(F(Y_{0},\ldots ,Y_{m-1}))=\deg (F(Y_{0}^{w_{0}},\ldots ,Y_{m-1}^{w_{m-1}})),
\end{equation*}
where $\deg$ is the usual degree.\\
We will, without loss of generality, always assume that the weights are ordered $w_{0}\leq w_{1}\leq \ldots \leq w_{m-1}$.
Consider the evaluation map
\begin{equation}
\begin{aligned}
\phi :\quad  W\mathbb{F}_{q}[Y_{0},\ldots ,Y_{m-1}]&\longrightarrow (\mathbb{F}_{q})^{q^{m}} \\
        F &\longmapsto \phi(F)=(F(P_{1}),\ldots ,F(P_{n}))
\end{aligned}
\end{equation}
where $P_{1},\ldots ,P_{n}$ ($n=q^{m}$) is an arbitrary ordering of the elements of $(\mathbb{F}_{q})^{m}$.\\
For $w_{0}= w_{1}= \ldots = w_{m-1}=1$ we have the following definition.\\
The Generalized Reed-Muller codes of order $\nu$ ($1\leq \nu \leq m(q-1)$) and length $n=q^{m}$ is defined by
\begin{equation*}
C_{\nu}(m,q)=\phi(V(\nu)),
\end{equation*}
where
\begin{equation*}
V(\nu)=\{F\in \mathbb{F}_{q}[Y_{0},\ldots ,Y_{m-1}]\mid \deg(F)\leq\nu\}.
\end{equation*}
Let $\omega$ be a natural number and $\{w_{0},\ldots ,w_{m-1}\}$ be weights corresponding to the ring of weighted polynomials $W\mathbb{F}_{q}[Y_{0},\ldots ,Y_{m-1}]$. The weighted Reed-Muller codes $WRMC_{\omega}(m,q)$ of weighted order $\omega$ and length $n=q^{m}$, corresponding to the weights $\{w_{0},\ldots ,w_{m-1}\}$ is defined by
\begin{equation}
WRMC_{\omega}(m,q)=\phi(V_{\varpi}(\omega)),
\end{equation}
where
\begin{equation}
V_{\varpi}(\omega)=\{F\in W\mathbb{F}_{q}[Y_{0},\ldots ,Y_{m-1}]\mid \deg_{\varpi}(F)\leq \omega\}.
\end{equation}
For a polynomial $F\in \mathbb{F}_{q}[Y_{0},\ldots ,Y_{m-1}]$, $\overline{F}$ denotes the reduced form of $F$, i.e. the polynomial of lowest degree equivalent to $F$ modulo the ideal $(Y_{i}^{q}-Y_{i},i=0,\ldots m-1)$. For any subset $M$ of $\mathbb{F}_{q}[Y_{0},\ldots ,Y_{m-1}]$, the set $\overline{M}$ denotes the set of reduced elements of $M$.
\begin{rem}
For $F\in \mathbb{F}_{q}[Y_{0},\ldots ,Y_{m-1}]$, we have
\begin{enumerate}
\item for every $P \in (\mathbb{F}_{q})^{m}: F(P)= \overline{F}(P)$.
\item if $F(P)=0$ for all $P\in (\mathbb{F}_{q})^{m}$, then $\overline{F}=0$.
\end{enumerate}
\end{rem}
Given natural numbers $\omega,\nu$, and a set of weights $\{w_{0},\ldots ,w_{m-1}\}$ such that $1\leq \nu\leq m(q-1)$ and $1\leq \omega\leq (q-1)\sum_{i=1}^{m}w_{i}$.\\
Let
\begin{equation*}
\nu_{max}(\omega)=Q'(q-1)+R'
\end{equation*}
where
\begin{equation*}
Q'=max\{Q\mid \omega\geq \sum_{i=0}^{Q}(q-1)w_{i}\}
\end{equation*}
and
\begin{equation*}
R'=max\{R\mid \omega\geq \sum_{i=0}^{Q'}(q-1)w_{i}+Rw_{Q'+1}\}.
\end{equation*}
\begin{thm}
Given a natural number $\omega$ and a set of ordered weights\\$\{w_{0},\ldots ,w_{m-1}\}$ such that $\omega \leq (q-1)\sum_{i=0}^{m-1}w_{i}$. The code $WRMC_{\omega}(m,q)$ is an $\mathbb{F}_{q}$-linear $[q^{m},k,d]$ code with
\begin{equation*}
k=\card(\{(e_{0},\ldots ,e_{m-1})\mid \sum_{i=0}^{m-1}w_{i}e_{i}\leq \omega, 0\leq e_{i}<q\})
\end{equation*}
and
\begin{equation*}
d=q^{m-Q-1}(q-R)
\end{equation*}
where $Q$ and $R$ are given by
\begin{equation*}
\nu_{max}(\omega)=Q(q-1)+R,
\end{equation*}
\end{thm}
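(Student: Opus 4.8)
\emph{Outline of the intended proof.} The plan is to reduce the whole statement to the well-known structure and parameters of the Generalized Reed-Muller codes $C_{\nu}(m,q)$. For the dimension, first note that reduction modulo $(Y_i^{q}-Y_i)_i$ is $\mathbb{F}_q$-linear and, by the second part of the Remark above, the only reduced polynomial vanishing on all of $(\mathbb{F}_q)^{m}$ is $0$; hence $\phi$ separates reduced forms, i.e. $\phi(F_1)=\phi(F_2)$ forces $\overline{F_1}-\overline{F_2}=\overline{F_1-F_2}=0$. Moreover the $q^{m}$ reduced monomials $Y_0^{e_0}\cdots Y_{m-1}^{e_{m-1}}$ ($0\le e_i<q$) span the $q^{m}$-dimensional space of all functions $(\mathbb{F}_q)^{m}\to\mathbb{F}_q$ (Lagrange interpolation together with $F(P)=\overline F(P)$), so they form an $\mathbb{F}_q$-basis of it and any subfamily of them is independent. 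I would then check that $\overline{V_{\varpi}(\omega)}$ is exactly the span of those reduced monomials whose exponent vector satisfies $\sum_{i=0}^{m-1}w_ie_i\le\omega$: every monomial occurring in an $F\in V_{\varpi}(\omega)$ has weighted degree $\le\omega$, and repeatedly replacing $Y_i^{q}$ by $Y_i$ only lowers exponents and hence does not raise any weighted degree, so $\overline F$ is a combination of such monomials; conversely each such monomial already lies in $V_{\varpi}(\omega)$ and is reduced. Therefore $\dim_{\mathbb{F}_q}WRMC_{\omega}(m,q)=\dim_{\mathbb{F}_q}\overline{V_{\varpi}(\omega)}$ equals the number of admissible exponent vectors, which is $k$; linearity and length $q^{m}$ are immediate since $V_{\varpi}(\omega)$ is an $\mathbb{F}_q$-subspace and $\phi$ is $\mathbb{F}_q$-linear.

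For the minimum distance, the key observation is that $\nu_{max}(\omega)$ is precisely the largest ordinary degree $\sum_i e_i$ of a reduced monomial of weighted degree $\le\omega$: since $w_0\le\cdots\le w_{m-1}$, the optimum is obtained by saturating the cheapest variables to exponent $q-1$ (this is what $Q'$ records) and then spending the remaining budget on the next variable (this is $R'$). Consequently every codeword of $WRMC_{\omega}(m,q)$ is $\phi$ of a reduced polynomial of ordinary degree at most $\nu_{max}(\omega)$, so $WRMC_{\omega}(m,q)\subseteq C_{\nu_{max}(\omega)}(m,q)$, and therefore $d\ge d\bigl(C_{\nu_{max}(\omega)}(m,q)\bigr)=q^{m-Q-1}(q-R)$ by the classical minimum-distance formula for the GRM codes, where we write $\nu_{max}(\omega)=Q(q-1)+R$ with $0\le R<q-1$.

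To see that this bound is attained I would exhibit the standard minimum-weight GRM codeword, but with its heavy factors placed on the cheapest variables:
\[
g=\prod_{i=0}^{Q-1}\bigl(1-Y_i^{q-1}\bigr)\cdot\prod_{j=1}^{R}\bigl(Y_Q-\beta_j\bigr),\qquad \beta_1,\dots,\beta_R\in\mathbb{F}_q\ \text{distinct}.
\]
A direct count shows that the Hamming weight of $\phi(g)$ is $q^{m-Q-1}(q-R)$, because $g(P)\ne 0$ forces $Y_0=\cdots=Y_{Q-1}=0$ and $Y_Q\notin\{\beta_1,\dots,\beta_R\}$ while the remaining $m-Q-1$ coordinates are free. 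The one substantive check is that $g\in V_{\varpi}(\omega)$: the weighted degree of $g$ equals that of its leading monomial $Y_0^{q-1}\cdots Y_{Q-1}^{q-1}Y_Q^{R}$, namely $(q-1)\sum_{i=0}^{Q-1}w_i+Rw_Q$, and one must verify that this does not exceed $\omega$ by unwinding the definitions of $Q'$ and $R'$ and matching the indices $Q,R$ with $Q',R'$. This bookkeeping — reconciling the two descriptions of $\nu_{max}(\omega)$ and handling the degenerate cases $Q=0$, $R=0$ and $\omega=(q-1)\sum_i w_i$ — is the only delicate point of the argument (the alternative, if one does not wish to cite the GRM minimum distance, being to reprove that bound, e.g. by a footprint/Gröbner-basis estimate). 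Once it is done, $\phi(g)$ meets the lower bound and $d=q^{m-Q-1}(q-R)$.
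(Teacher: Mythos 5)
The paper does not actually prove this theorem: it is quoted verbatim from Sorensen [11], so there is no in-paper argument to compare against. Your proposal is essentially the standard (Sorensen) proof, and it is correct. The dimension half is clean: injectivity of $\phi$ on reduced polynomials, the observation that reduction modulo $(Y_i^q-Y_i)$ never raises weighted degree, and the resulting monomial basis of $\overline{V_{\varpi}(\omega)}$ (which is exactly Remark 2.3 of the paper) give $k$. The distance half correctly identifies the key lemma --- $\nu_{max}(\omega)$ is the maximal ordinary degree of a reduced monomial of weighted degree at most $\omega$, obtained greedily because the weights are increasing --- which yields $WRMC_{\omega}(m,q)\subseteq C_{\nu_{max}(\omega)}(m,q)$ and hence the lower bound, and your word $g=\prod_{i=0}^{Q-1}(1-Y_i^{q-1})\prod_{j=1}^{R}(Y_Q-\beta_j)$ attains it. One caution on the ``bookkeeping'' you defer: as printed in this paper the definition $Q'=\max\{Q\mid \omega\geq\sum_{i=0}^{Q}(q-1)w_i\}$ sums over $Q+1$ weights while the formula $\nu_{max}(\omega)=Q'(q-1)+R'$ counts only $Q'$ saturated variables; this is an off-by-one artifact of converting Sorensen's $1$-based indexing to $0$-based indexing, and your greedy description (saturate the $Q$ cheapest variables, spend $R$ on the next) is the consistent reading --- with it, $(q-1)\sum_{i=0}^{Q-1}w_i+Rw_Q\leq\omega$ follows directly from the definitions of $Q'$ and $R'$, and $0\leq R'<q-1$ follows from the maximality of $Q'$ (given positive weights), so $Q=Q'$, $R=R'$ by uniqueness of the Euclidean decomposition. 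With that reading fixed, nothing in your outline fails.
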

with $0\leq R< q-1$.
\begin{rem}
The set of monomials
\begin{equation*}
\{\prod_{i=0}^{m-1}Y_{i}^{e_{i}}\mid \sum_{i=0}^{m-1}w_{i}e_{i}\leq \omega,0\leq e_{i}<q\}
\end{equation*}
is a basis of $\overline{V_{\varpi}(\omega)}$.
\end{rem}

\section{Description of the weighted Reed-Muller codes in $A$}
Consider the modular algebra
\begin{equation*}
A=\mathbb{F}_{p}[X_{0},X_{1},\ldots ,X_{m-1}]/(X_{0}^{p}-1,\ldots ,X_{m-1}^{p}-1)
\end{equation*}
and the ideal
\begin{equation*}
I=\left(X_{0}^{p}-1,\ldots,X_{m-1}^{p}-1\right)
\end{equation*}
of the polynomial ring $\mathbb{F}_{p}[X_{0},\ldots,X_{m-1}]$, where $\mathbb{F}_{p}$ is the prime field of $p$ (a prime number) elements.\\
Set $x_0=X_0+I,\ldots,x_{m-1}=X_{m-1}+I$.
Let us fix an order on the set of monomials
\begin{equation*}
\left\{x_{0}^{i_0}\ldots x_{m-1}^{i_{m-1}} \mid 0\leq i_{0},\ldots,i_{m-1}\leq p-1\right\}.
\end{equation*}
Then
\begin{equation}
A=\left\{\sum_{i_{0}=0}^{p-1}\cdots\sum_{i_{m-1}=0}^{p-1}a_{i_{0}\ldots i_{m-1}}x_{0}^{i_0}\ldots x_{m-1}^{i_{m-1}} \mid a_{i_0\ldots i_{m-1}}\in \mathbb{F}_{p}\right\}.
\end{equation}
And we have the following identification:\\
$A\ni\sum_{i_0=0}^{p-1}\cdots\sum_{i_{m-1}=0}^{p-1}a_{i_0\ldots i_{m-1}}x_0^{i_0}\ldots x_{m-1}^{i_{m-1}} \longleftrightarrow (a_{i_0\ldots i_{m-1}})_{0\leq i_{0},\ldots,i_{m-1}\leq p-1}\in \left(\mathbb{F}_{p}\right)^{p^m}$.\\
Hence the modular algebra $A$ is identified with $\left(\mathbb{F}_{p}\right)^{p^m}$.\\
$P(m,p)$ denotes the vector space of the reduced polynomials in $m$ variables over $\mathbb{F}_{p}$:
\begin{equation*}\label{redpoly}
\left\{P(Y_{0},\ldots,Y_{m-1})=\sum_{i_{0}=0}^{p-1}\cdots\sum_{i_{m-1}=0}^{p-1}u_{i_0\ldots i_{m-1}}Y_0^{i_0}\ldots Y_{m-1}^{i_{m-1}} \mid u_{i_0\ldots i_{m-1}}\in \mathbb{F}_{p}\right\}.
\end{equation*}
Consider  a set of weights $\{w_{0},\ldots ,w_{m-1}\}$ and let $\omega$ be an integer such that $0\leq\omega\leq (p-1)(w_{0}+\ldots +w_{m-1})$.\\
When considering $P(m,p)$ and $A$ as vector spaces over $\mathbb{F}_{p}$, we have the following isomorphism:
\begin{equation}\label{isom}
\begin{aligned}
\psi :\quad  P(m,p)&\longrightarrow A \\
        P(Y_{0},\ldots,Y_{m-1}) &\longmapsto \sum_{i_0=0}^{p-1}\cdots\sum_{i_{m-1}=0}^{p-1}P(i_0,\ldots ,i_{m-1})x_0^{i_0}\ldots x_{m-1}^{i_{m-1}}
\end{aligned}
\end{equation}
The set
\begin{equation}
B:=\left\{(x_{0}-1)^{i_{0}}\ldots (x_{m-1}-1)^{i_{m-1}} \mid 0\leq i_{0},\ldots,i_{m-1}\leq p-1\right\}
\end{equation}
is called the Jennings basis of $A$.\\
Set $\left[0,p^m-1\right]=\left\{0,1,2,\ldots ,p^m-1\right\}$.\\
Let $i\in \left[0,p^m-1\right]$. Consider its $p$-adic expansion
\begin{equation*}
i=\sum_{k=0}^{m-1}i_{k}p^{k}
\end{equation*}
with $0\leq i_{k}\leq p-1$ for all $k=0,\ldots ,m-1$.\\
We need the following notations and definitions:\\
$\underline{i}:=(i_{0},\ldots,i_{m-1})$,\\
the $p$-weight of $i$ is defined by $wt_{p}(i):=\sum_{k=0}^{m-1}i_{k}$,\\
and the $p$-weight of $i$ with respect to the set of weights $\{w_{0},\ldots ,w_{m-1}\}$ is defined by
\begin{equation}
Wwt_{p}(i):=\sum_{k=0}^{m-1}i_{k}w_{k}.
\end{equation}
$\underline{j}\leq\underline{i}$ if $j_{l}\leq i_{l}$ for all $l=0,1,\ldots ,m-1$ where $\underline{j}:=(j_{0},\ldots,j_{m-1})\in ([0,p-1])^{m}$,\\
$\underline{x}:=(x_{0},\ldots,x_{m-1})$,\\
$\underline{x}^{\underline{i}}:=x_0^{i_0}\ldots x_{m-1}^{i_{m-1}}$.\\
\noindent Consider the polynomial
\begin{equation}\label{jennpoly}
B_{\underline{i}}(\underline{x}):=(x_{0}-1)^{i_0}\ldots (x_{m-1}-1)^{i_{m-1}} \in A.
\end{equation}
The following proposition is from [1].
\begin{prop}\label{jenpol}
We have $H_{\underline{i}}(\underline{Y})=\psi^{-1}(B_{\underline{i}}(\underline{x}))$, where $\psi$ is the isomorphism defined in (\ref{isom}), i.e.
\begin{equation*}
B_{\underline{i}}(\underline{x})=\sum_{\underline{j}\leq \underline{i}}H_{\underline{i}}(\underline{j})\underline{x}^{\underline{j}}
\end{equation*}
where
\begin{equation*}\label{prodinterpf}
H_{\underline{i}}(\underline{Y}):=\prod_{l=0}^{m-1}H_{i_l}(Y_l)
\end{equation*}
and
\begin{equation*}
H_i(Y) = \alpha_i \prod_{j=1}^{p-1-i} (Y+j),
\end{equation*}
with $\alpha_i = - i! \mbox{ mod } p$.
\end{prop}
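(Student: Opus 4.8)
The plan is to reduce the identity to a single variable and then verify it coefficient by coefficient, the arithmetic core being Wilson's theorem. First I would exploit that $H_{\underline i}(\underline Y)=\prod_{l=0}^{m-1}H_{i_l}(Y_l)$ is a product of univariate polynomials in separate variables: applying the linear map $\psi$ of (\ref{isom}) and distributing the resulting product of sums yields
\begin{equation*}
\psi\bigl(H_{\underline i}\bigr)=\sum_{j_0=0}^{p-1}\cdots\sum_{j_{m-1}=0}^{p-1}\Bigl(\prod_{l=0}^{m-1}H_{i_l}(j_l)\Bigr)x_0^{j_0}\cdots x_{m-1}^{j_{m-1}}=\prod_{l=0}^{m-1}\Bigl(\sum_{j=0}^{p-1}H_{i_l}(j)\,x_l^{\,j}\Bigr).
\end{equation*}
So it is enough to prove the univariate statement $\sum_{j=0}^{p-1}H_i(j)\,x^{\,j}=(x-1)^i$ in $\mathbb{F}_p[x]/(x^p-1)$ for every $i\in[0,p-1]$; taking the product over $l$ then gives $\psi(H_{\underline i})=\prod_l(x_l-1)^{i_l}=B_{\underline i}(\underline x)$, and the restriction of the sum to $\underline j\leq\underline i$ in the statement comes for free once we know $H_i(j)=0$ whenever $j>i$.

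For the univariate identity, since $(x-1)^i$ has degree $i<p$ it is already reduced modulo $x^p-1$, so I would simply match the coefficients of $x^k$ on the two sides: it suffices to show $H_i(k)=0$ for $i<k\leq p-1$ and $H_i(k)=(-1)^{i-k}\binom{i}{k}$ for $0\leq k\leq i$. The vanishing is immediate from the factored form $H_i(Y)=\alpha_i\prod_{j=1}^{p-1-i}(Y+j)$, whose roots in $\mathbb{F}_p$ are precisely $p-1,p-2,\ldots,i+1$, i.e. the set $\{i+1,\ldots,p-1\}$.

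For $0\leq k\leq i$, set $a=i-k$; then every factor of $\prod_{j=1}^{p-1-i}(k+j)$ runs through $k+1,k+2,\ldots,p-1-a$, all lying in $\{1,\ldots,p-1\}$ and hence invertible in $\mathbb{F}_p$, so this product equals $(p-1-a)!/k!$ and $H_i(k)=-\,i!\,(p-1-a)!/k!$ (using $\alpha_i=-i!$). Comparing with $(-1)^{i-k}\binom{i}{k}=(-1)^a\,i!/(k!\,a!)$ reduces the claim to the congruence $-(p-1-a)!\,a!\equiv(-1)^a\pmod p$. This follows from Wilson's theorem: grouping $(p-1)!=(p-1)(p-2)\cdots(p-a)\cdot(p-1-a)!\equiv(-1)^a a!\,(p-1-a)!$ and using $(p-1)!\equiv-1$ gives $a!\,(p-1-a)!\equiv(-1)^{a+1}$, whence the claim.

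I expect the only genuine obstacle to be the factorial bookkeeping in the last step: one must make sure all the factorials that occur are honest products of integers in $\{1,\ldots,p-1\}$ (so they are invertible in $\mathbb{F}_p$ and the cancellations are legitimate), and then keep careful track of signs when invoking Wilson's theorem. The separation of variables and the root computation are routine.
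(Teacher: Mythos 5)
Your proof is correct. Note that the paper itself offers no proof of this proposition --- it is imported from reference [1] with the remark ``The following proposition is from [1]'' --- so there is nothing in the text to compare against; your argument is a valid self-contained verification. The separation of variables reducing everything to the univariate identity $\sum_{j=0}^{p-1}H_i(j)x^j=(x-1)^i$ in $\mathbb{F}_2$... rather in $\mathbb{F}_p[x]/(x^p-1)$, the identification of the roots of $H_i$ as $\{i+1,\ldots,p-1\}$ (which both gives the vanishing for $k>i$ and justifies truncating the sum to $\underline{j}\leq\underline{i}$), and the Wilson-theorem computation showing $H_i(k)=(-1)^{i-k}\binom{i}{k}=$ the coefficient of $x^k$ in $(x-1)^i$, all check out; the one boundary case worth stating explicitly is $i=p-1$, where $\prod_{j=1}^{p-1-i}(Y+j)$ is the empty product and $H_{p-1}(Y)=-(p-1)!\equiv 1\pmod p$, consistent with $(x-1)^{p-1}=\sum_{k=0}^{p-1}x^k$.
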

\begin{cor}\label{weightdeg}
We have
\begin{equation*}
\deg_{\varpi}(H_{\underline{i}}(\underline{Y}))=(p-1)\sum_{l=0}^{m-1}w_{l}-Wwt_{p}(i).
\end{equation*}
\end{cor}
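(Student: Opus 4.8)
The plan is to reduce everything to the single‑variable factors and then use that the weighted degree is additive on products. First I would record the degree of each factor $H_i(Y)$. Since $H_i(Y)=\alpha_i\prod_{j=1}^{p-1-i}(Y+j)$ with $\alpha_i=-i!\bmod p$, and since $0\le i\le p-1$ forces $i!\not\equiv 0\pmod p$, the leading coefficient $\alpha_i$ is a nonzero element of $\mathbb{F}_p$; hence no degree drop occurs and $\deg(H_i(Y))=p-1-i$ (the case $i=p-1$ giving the empty product, so $H_{p-1}$ is a nonzero constant of degree $0=p-1-(p-1)$, consistent with the formula).

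Next I would pass to the weighted setting. By the definition of $\deg_{\varpi}$, substituting $Y_l\mapsto Y_l^{w_l}$ turns $H_{i_l}(Y_l)$ into a polynomial in one variable of usual degree $w_l\deg(H_{i_l}(Y_l))=w_l(p-1-i_l)$, so $\deg_{\varpi}(H_{i_l}(Y_l))=w_l(p-1-i_l)$. Then, because $H_{\underline{i}}(\underline{Y})=\prod_{l=0}^{m-1}H_{i_l}(Y_l)$ and the usual degree is additive on products of polynomials over a field, applying the substitution $Y_l\mapsto Y_l^{w_l}$ simultaneously to all variables gives
\begin{equation*}
\deg_{\varpi}(H_{\underline{i}}(\underline{Y}))=\sum_{l=0}^{m-1}\deg_{\varpi}(H_{i_l}(Y_l))=\sum_{l=0}^{m-1}w_l(p-1-i_l).
\end{equation*}

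Finally I would expand this sum and recognize the $p$-weight: $\sum_{l=0}^{m-1}w_l(p-1-i_l)=(p-1)\sum_{l=0}^{m-1}w_l-\sum_{l=0}^{m-1}w_l i_l=(p-1)\sum_{l=0}^{m-1}w_l-Wwt_p(i)$, using the definition $Wwt_p(i)=\sum_{k=0}^{m-1}i_kw_k$. This is exactly the claimed identity. There is no real obstacle here; the only point that needs a word of care is the non‑vanishing of the leading coefficients $\alpha_i$ (so that the single‑variable degrees are exactly $p-1-i$ and not smaller), which is immediate from $i\le p-1$.
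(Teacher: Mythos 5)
Your proof is correct and is exactly the argument the paper implicitly relies on (the corollary is stated without proof as an immediate consequence of Proposition \ref{jenpol}): each factor $H_{i_l}(Y_l)$ has usual degree $p-1-i_l$ since $\alpha_{i_l}=-i_l!\bmod p\neq 0$, the substitution $Y_l\mapsto Y_l^{w_l}$ scales that to $w_l(p-1-i_l)$, and additivity of degree over the product gives $(p-1)\sum_l w_l-Wwt_p(i)$. Your extra remark about the non-vanishing leading coefficients (including the empty-product case $i=p-1$) is a correct and worthwhile point of care.
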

\noindent We now present a description of the weighted Reed-Muller code $WRMC_{\omega}(m,p)$ in the algebra $A$.
\begin{thm}
Consider  a set of weights $\{w_{0},\ldots ,w_{m-1}\}$ and let $\omega$ be an integer such that $0\leq \omega\leq (p-1)\sum_{l=0}^{m-1}w_{l}$. Then, the set
\begin{equation*}
B_{\omega}:=\{(x_{0}-1)^{i_{0}}\ldots (x_{m-1}-1)^{i_{m-1}}\mid 0\leq i_{k}\leq p-1, \sum_{k=0}^{m-1}w_{k}i_{k}\geq (p-1)\sum_{k=0}^{m-1}w_{k}-\omega\}
\end{equation*}
forms a linear basis of the weighted Reed-Muller code $WRMC_{\omega}(m,p)$ over $\mathbb{F}_{p}$ in $A$.
\end{thm}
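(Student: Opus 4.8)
The plan is to transport everything across the linear isomorphism $\psi$ of (\ref{isom}). Recall first that, under the identification of $A$ with $(\mathbb{F}_{p})^{p^{m}}$, the code $WRMC_{\omega}(m,p)$ living inside $A$ is the image $\psi(\overline{V_{\varpi}(\omega)})$: a polynomial and its reduced form take the same values at every point of $(\mathbb{F}_{p})^{m}$, so $\phi(V_{\varpi}(\omega))=\phi(\overline{V_{\varpi}(\omega)})$, and on reduced polynomials $\phi$ is nothing but the evaluation map packaged by $\psi$ (with the points of $(\mathbb{F}_{p})^{m}$ listed in the fixed order used for $A$). By the Remark recalling the monomial basis of $\overline{V_{\varpi}(\omega)}$, that space has dimension $k=\card\{(e_{0},\ldots,e_{m-1})\mid\sum_{k}w_{k}e_{k}\leq\omega,\ 0\leq e_{k}<p\}$; and since $\psi$ is an $\mathbb{F}_{p}$-isomorphism, $\dim_{\mathbb{F}_{p}}WRMC_{\omega}(m,p)=k$. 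So it will be enough to exhibit inside $WRMC_{\omega}(m,p)$ a linearly independent family of exactly $k$ vectors and to identify it with $B_{\omega}$.

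The crux is to read off, from Proposition~\ref{jenpol} and Corollary~\ref{weightdeg}, precisely when a Jennings vector belongs to the code. Fix $\underline{i}=(i_{0},\ldots,i_{m-1})\in([0,p-1])^{m}$. By Proposition~\ref{jenpol}, $B_{\underline{i}}(\underline{x})=\psi(H_{\underline{i}}(\underline{Y}))$, where $H_{\underline{i}}(\underline{Y})=\prod_{l}H_{i_{l}}(Y_{l})$ is already reduced (it has degree $p-1-i_{l}<p$ in $Y_{l}$) and, being a product of one-variable factors, has the property that every monomial occurring in it has weighted degree at most $\deg_{\varpi}(H_{\underline{i}}(\underline{Y}))$. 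By Corollary~\ref{weightdeg},
\begin{equation*}
\deg_{\varpi}\bigl(H_{\underline{i}}(\underline{Y})\bigr)=(p-1)\sum_{l=0}^{m-1}w_{l}-\sum_{k=0}^{m-1}w_{k}i_{k},
\end{equation*}
so the inequality defining $B_{\omega}$, namely $\sum_{k}w_{k}i_{k}\geq(p-1)\sum_{k}w_{k}-\omega$, is equivalent to $\deg_{\varpi}(H_{\underline{i}}(\underline{Y}))\leq\omega$; and when this holds, the observation just made, combined with the monomial basis of $\overline{V_{\varpi}(\omega)}$, shows $H_{\underline{i}}(\underline{Y})\in\overline{V_{\varpi}(\omega)}$. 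Therefore $B_{\omega}=\psi\bigl(\{\,H_{\underline{i}}(\underline{Y})\mid\deg_{\varpi}(H_{\underline{i}}(\underline{Y}))\leq\omega\,\}\bigr)\subseteq\psi(\overline{V_{\varpi}(\omega)})=WRMC_{\omega}(m,p)$.

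It remains to check that $B_{\omega}$ is linearly independent with $\card B_{\omega}=k$. Independence is immediate, since $B_{\omega}\subseteq B$ and $B$ is an $\mathbb{F}_{p}$-basis of $A$ (equivalently $\{H_{\underline{i}}(\underline{Y})\}_{\underline{i}}=\psi^{-1}(B)$ is a basis of $P(m,p)$, because in one variable $H_{0}(Y),\ldots,H_{p-1}(Y)$ have the pairwise distinct degrees $0,1,\ldots,p-1$). For the cardinality, the involution $(i_{0},\ldots,i_{m-1})\mapsto(p-1-i_{0},\ldots,p-1-i_{m-1})$ of $([0,p-1])^{m}$ maps the index set of $B_{\omega}$ bijectively onto $\{(e_{0},\ldots,e_{m-1})\mid\sum_{k}w_{k}e_{k}\leq\omega,\ 0\leq e_{k}<p\}$, since $\sum_{k}w_{k}(p-1-i_{k})=(p-1)\sum_{k}w_{k}-\sum_{k}w_{k}i_{k}$; hence $\card B_{\omega}=k=\dim_{\mathbb{F}_{p}}WRMC_{\omega}(m,p)$. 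A linearly independent subset of a finite-dimensional vector space whose size equals the dimension is a basis, which completes the argument. The point I expect to need the most care is the second one: showing not merely that $H_{\underline{i}}(\underline{Y})$ has weighted degree $\leq\omega$ but that it genuinely lies in $\overline{V_{\varpi}(\omega)}$ (reduced, and all its monomials within the bound), and matching $\card B_{\omega}$ with $\dim WRMC_{\omega}(m,p)$ through the complementation $e_{k}=p-1-i_{k}$; once Corollary~\ref{weightdeg} is in hand, the rest is formal.
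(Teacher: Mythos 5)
Your proposal is correct and follows essentially the same route as the paper's own proof: membership of each Jennings element in the code via Proposition~\ref{jenpol} and Corollary~\ref{weightdeg}, linear independence from $B_{\omega}\subseteq B$, and the complementation bijection $i_{k}\mapsto p-1-i_{k}$ to match $\card B_{\omega}$ with the dimension. The only difference is that you spell out more explicitly the identification $WRMC_{\omega}(m,p)=\psi(\overline{V_{\varpi}(\omega)})$ and the dimension count via the monomial basis, which the paper takes as known.
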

\begin{proof}
It is clear that $B_{\omega}$ is a set of linearly independant elements because $B_{\omega}\subseteq B$.\\
Let $B_{\underline{i}}(\underline{x}):=(x_{0}-1)^{i_0}\ldots (x_{m-1}-1)^{i_{m-1}} \in B_{\omega}$, i.e. $0\leq i_{k}\leq p-1$, for all $k=0,\ldots ,m-1$, and $\sum_{k=0}^{m-1}w_{k}i_{k}\geq (p-1)\sum_{k=0}^{m-1}w_{k}-\omega$.\\
By the Proposition \ref{jenpol} and the Corollary \ref{weightdeg}, we have $B_{\underline{i}}(\underline{x})=\sum_{\underline{j}\leq \underline{i}}H_{\underline{i}}(\underline{j})\underline{x}^{\underline{j}}$ with $H_{\underline{i}}(\underline{Y})=\prod_{l=0}^{m-1}H_{i_l}(Y_l)$, $H_i(Y) = \alpha_i \prod_{j=1}^{p-1-i} (Y+j)$, and $\alpha_i = - i! \mbox{ mod } p$.\\
We have $\deg_{\varpi}(H_{\underline{i}}(\underline{Y}))=(p-1)\sum_{l=0}^{m-1}w_{l}-Wwt_{p}(i)\leq \omega$.\\
Thus $H_{\underline{i}}(\underline{Y})\in V_{\varpi}(\omega)$.\\
Therefore, $B_{\underline{i}}(\underline{x})\in WRMC_{\omega}(m,p)$.\\
It is clear that $\dim_{\mathbb{F}_{p}}(WRMC_{\omega}(m,p))=\card(\{i\in [0,p^m-1]\mid Wwt_{p}(i)\leq \omega\})$.\\
On the other hand, we have $\card(B_{\omega})=\card(\{i\in[0,p^m-1]\mid Wwt_{p}(i)\geq (p-1)\sum_{k=0}^{m-1}w_{k}-\omega\})$.\\
Consider the bijection
\begin{equation*}
\begin{aligned}
\theta : [0,p^m-1] &\longrightarrow [0,p^m-1]\\
         i=\sum_{k=0}^{m-1}i_{k}p^{k}&\longmapsto \theta(i)=\sum_{k=0}^{m-1}(p-1-i_{k})p^{k}.
\end{aligned}
\end{equation*}
We have $Wwt_{p}(\theta(i))=\sum_{k=0}^{m-1}w_{k}(p-1-i_{k})=(p-1)\sum_{k=0}^{m-1}w_{k}-Wwt_{p}(i)$, i.e. $Wwt_{p}(i)=(p-1)\sum_{k=0}^{m-1}w_{k}-Wwt_{p}(\theta(i))$.\\
Thus, we have $Wwt_{p}(i)\leq \omega \Longleftrightarrow Wwt_{p}(\theta(i))\geq (p-1)\sum_{k=0}^{m-1}w_{k}-\omega$.\\
Hence, $\card(\{i\in[0,p^m-1]\mid Wwt_{p}(i)\leq \omega\})=\card(\{i\in[0,p^m-1]\mid Wwt_{p}(i)\geq (p-1)\sum_{k=0}^{m-1}w_{k}-\omega\})$.
\end{proof}
The following Corollary is the famous result of Berman-Charpin ([1],[4],[5]).
\begin{cor}
Consider the weights $w_{0}=\ldots =w_{m-1}=1$ and an integer $\omega$ such that $0\leq \omega\leq m(p-1)$. Then, the set
\begin{equation*}
B_{\omega}:=\{(x_{0}-1)^{i_{0}}\ldots (x_{m-1}-1)^{i_{m-1}}\mid 0\leq i_{k}\leq p-1, \sum_{k=0}^{m-1}i_{k}\geq m(p-1)-\omega\}
\end{equation*}
forms a linear basis of the GRM code $C_{\omega}(m,p)=P^{m(p-1)-\omega}$ over $\mathbb{F}_{p}$, where $P$ is the radical power of $A$.
\end{cor}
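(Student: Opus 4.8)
The plan is to recognize this Corollary as the specialization of the preceding Theorem to the constant weights $w_{0}=\cdots=w_{m-1}=1$, and then to identify the resulting code with a power of the radical of $A$.

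First I would apply the preceding Theorem with $w_{k}=1$ for all $k$: the hypothesis $\omega\leq(p-1)\sum_{l}w_{l}$ becomes $\omega\leq m(p-1)$, the condition $\sum_{k}w_{k}i_{k}\geq(p-1)\sum_{k}w_{k}-\omega$ becomes $\sum_{k}i_{k}\geq m(p-1)-\omega$, and the Theorem then says that this $B_{\omega}$ is an $\mathbb{F}_{p}$-basis of $WRMC_{\omega}(m,p)$. Moreover, when all weights equal $1$ one has $\deg_{\varpi}(F)=\deg(F)$ for every $F$, so $V_{\varpi}(\omega)=V(\omega)$ and hence $WRMC_{\omega}(m,p)=\phi(V_{\varpi}(\omega))=\phi(V(\omega))=C_{\omega}(m,p)$. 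Thus $B_{\omega}$ is already a basis of the GRM code $C_{\omega}(m,p)$, and it only remains to show $\mathrm{span}_{\mathbb{F}_{p}}(B_{\omega})=P^{\,m(p-1)-\omega}$, where $P=\rad(A)$.

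Next I would make the local structure of $A$ explicit. In characteristic $p$ we have $X_{k}^{p}-1=(X_{k}-1)^{p}$, so the substitution $T_{k}=X_{k}-1$ gives $A\cong\mathbb{F}_{p}[T_{0},\ldots,T_{m-1}]/(T_{0}^{p},\ldots,T_{m-1}^{p})$. This ring is local: modulo $P=(x_{0}-1,\ldots,x_{m-1}-1)$ it reduces to $\mathbb{F}_{p}$, and $P$ is nilpotent since $(x_{k}-1)^{p}=x_{k}^{p}-1=0$; hence $P=\rad(A)$. For an integer $s\geq 0$, the power $P^{s}$ is by definition the $\mathbb{F}_{p}$-span of all products $(x_{0}-1)^{a_{0}}\cdots(x_{m-1}-1)^{a_{m-1}}$ with $\sum_{k}a_{k}\geq s$; and since $(x_{k}-1)^{p}=0$, every such product with some $a_{k}\geq p$ vanishes, so $P^{s}$ is spanned by $\{(x_{0}-1)^{i_{0}}\cdots(x_{m-1}-1)^{i_{m-1}}\mid 0\leq i_{k}\leq p-1,\ \sum_{k}i_{k}\geq s\}$. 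These vectors lie in the Jennings basis $B$ of $A$, hence are linearly independent, so they form a basis of $P^{s}$.

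Finally, taking $s=m(p-1)-\omega$, which lies in $[0,m(p-1)]$ by the hypothesis on $\omega$, this basis of $P^{\,m(p-1)-\omega}$ is exactly $B_{\omega}$. Combining with the first step, $C_{\omega}(m,p)=WRMC_{\omega}(m,p)=\mathrm{span}_{\mathbb{F}_{p}}(B_{\omega})=P^{\,m(p-1)-\omega}$, as claimed. The only substantive ingredient is the structural description of $\rad(A)$ and of its powers in terms of the Jennings basis; given the earlier results everything else is bookkeeping, essentially matching the radical-side condition ``$\sum_{k}i_{k}\geq m(p-1)-\omega$'' with the code-side condition ``$\deg_{\varpi}H_{\underline{i}}\leq\omega$'' coming from Corollary \ref{weightdeg}. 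I expect this matching — in particular verifying that reducing the exponents modulo $p$ does not alter the spanning set — to be the step requiring the most care, although it is routine.
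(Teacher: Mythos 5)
Your proof is correct and follows the same route the paper intends: the paper states this Corollary without proof, as the immediate specialization of the preceding Theorem to unit weights (under which $\deg_{\varpi}=\deg$ and $WRMC_{\omega}(m,p)=C_{\omega}(m,p)$), combined with the Berman--Charpin identification $C_{\omega}(m,p)=P^{m(p-1)-\omega}$ cited to [1],[4],[5]. You additionally supply a clean self-contained verification of that identification (locality of $A$ via $T_{k}=X_{k}-1$, nilpotency of $P$, and the Jennings-basis description of $P^{s}$), which the paper leaves to the references; all steps check out.
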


\section{The homogeneous Reed-Muller codes}
In this section, we recall the definition and some properties of the homogeneous Reed-Muller codes [3],[10]. $\mathbb{F}_{q}$ denote the field of $q=p^r$ elements with $p$ a prime number and $r\geq 1$ an integer. For $n=q^m-1$, let $\{0,P_{1},\ldots ,P_{n}\}$ be the set of points in $(\mathbb{F}_{q})^m$ ordered in a fixed order.\\
Let $\mathbb{F}_{q}[Y_{0},\ldots,Y_{m-1}]_{d}^{0}$ be the vector space of homogeneous polynomials in $m$ variables over $\mathbb{F}_{q}$ of degree $d$.\\
Now $d$ denote an integer such that $0\leq d\leq m(q-1)$. The $d$th order homogeneous Reed-Muller (HRM) codes of length $q^m$ over $\mathbb{F}_{q}$ is defined as
\begin{equation}
HRMC_{d}(m,q):=\{(F(0),F(P_{1}),\ldots,F(P_{n}))\mid F\in \mathbb{F}_{q}[Y_{0},\ldots,Y_{m-1}]_{d}^{0}\}.
\end{equation}
Thus $HRMC_{d}(m,q)$ is a proper subcode of the GRM code $C_{d}(m,q)$.\\
The following theorem can be found in [3].
\begin{thm}
Let $d$ such that $1\leq d\leq (m-1)(q-1)$. The HRM code $HRMC_{d}(m,q)$ is an $[n+1,k,\delta]$ linear code with $n+1=q^m$,
\begin{equation*}
k=\sum_{t\equiv d mod(q-1),0<t\leq d}\sum_{j=0}^{m}(-1)^j \binom{m}{j}\binom{t-jq+m-1}{t-jq},
\end{equation*}
and
\begin{equation*}
\delta=(q-1)(q-s)q^{m-r-2},
\end{equation*}
where $d-1=r(q-1)+s$ and $0\leq s<q-1$.
\end{thm}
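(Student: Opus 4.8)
The plan is to establish the three parameters separately: the length $q^{m}$ is immediate, so the work lies in the dimension $k$ and the minimum distance $\delta$. For $k$, I would analyse the evaluation map on $\mathbb{F}_{q}[Y_{0},\ldots,Y_{m-1}]_{d}^{0}$ by reducing modulo the field equations $(Y_{i}^{q}-Y_{i})$. Reducing a monomial of degree $d$ replaces each exponent $e_{i}\geq q$ by the representative of $e_{i}$ in $\{1,\ldots,q-1\}$ modulo $q-1$ (and fixes $e_{i}=0$), so the reduced monomial is nonzero and has some degree $t$ with $0<t\leq d$ and $t\equiv d\pmod{q-1}$; conversely every reduced monomial of such a degree $t$ is the reduction of a monomial of degree $d$, since $t>0$ forces some exponent to be positive and one may raise it (or spread the increment over the positive exponents) by multiples of $q-1$ until the total degree reaches $d$, which is possible precisely because $(d-t)/(q-1)$ is a non-negative integer. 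Hence the image of the evaluation map is spanned by the reduced monomials of degrees $t\equiv d\pmod{q-1}$, $0<t\leq d$, and as distinct reduced monomials are linearly independent as $\mathbb{F}_{q}$-valued functions on $(\mathbb{F}_{q})^{m}$, $k$ equals the number of such monomials. Counting the solutions of $e_{0}+\cdots+e_{m-1}=t$ with $0\leq e_{i}\leq q-1$ by inclusion–exclusion over the events $\{e_{i}\geq q\}$ gives $\sum_{j=0}^{m}(-1)^{j}\binom{m}{j}\binom{t-jq+m-1}{m-1}$, and summing over the admissible $t$ (rewriting the last binomial as $\binom{t-jq+m-1}{t-jq}$) yields the stated formula.

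For $\delta$, the crucial observation is that a nonzero homogeneous $F$ of degree $d\geq 1$ vanishes at the origin and satisfies $F(\lambda P)=\lambda^{d}F(P)$ for $P\in(\mathbb{F}_{q})^{m}$ and $\lambda\in\mathbb{F}_{q}^{*}$, so its non-vanishing set in $(\mathbb{F}_{q})^{m}\setminus\{0\}$ is a union of lines through the origin with the origin removed. Thus the weight of the codeword attached to $F$ equals $(q-1)$ times the number of points of $\mathbb{P}^{m-1}(\mathbb{F}_{q})$ off the hypersurface $\{F=0\}$, and (again by homogeneity) this codeword is zero exactly when $F$ vanishes on one representative of every projective point. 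This identifies the nonzero weights of $HRMC_{d}(m,q)$ with $(q-1)$ times the nonzero weights of the projective Reed–Muller code of order $d$ on $\mathbb{P}^{m-1}(\mathbb{F}_{q})$ (the evaluation code of degree-$d$ forms at representatives of its points), so $\delta=(q-1)\delta_{0}$ with $\delta_{0}$ the minimum distance of that projective code. Invoking S{\o}rensen's theorem, $\delta_{0}=(q-s)q^{(m-1)-1-r}$ when $d-1=r(q-1)+s$, $0\leq s<q-1$ (for $d\leq q$ this is Serre's bound on the number of $\mathbb{F}_{q}$-points of a degree-$d$ hypersurface in $\mathbb{P}^{m-1}$), giving $\delta=(q-1)(q-s)q^{m-r-2}$. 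The matching upper bound may alternatively be obtained by an explicit extremal form: $F=Y_{m-1}\cdot\widehat{g}$, where $\widehat{g}$ is the degree-$(d-1)$ homogenisation of $g=\prod_{j=1}^{r}(Y_{j}^{q-1}-1)\prod_{i=1}^{s}(Y_{0}-\beta_{i})$ with distinct $\beta_{i}\in\mathbb{F}_{q}$; using $r\leq m-2$ (which follows from $d\leq(m-1)(q-1)$) one checks that the associated codeword has weight exactly $(q-1)(q-s)q^{m-r-2}$.

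The main obstacle is the sharp lower bound on $\delta_{0}$, equivalently the sharp upper bound on the number of $\mathbb{F}_{q}$-rational points of a degree-$d$ hypersurface in $\mathbb{P}^{m-1}$; this is the real content, and is precisely S{\o}rensen's theorem. For a self-contained argument one proceeds by induction on the projective dimension: writing $X=\{F=0\}$, if some hyperplane contains $X$ then trivially $|X|\leq|\mathbb{P}^{m-2}(\mathbb{F}_{q})|$, while otherwise $X\cap H$ is, for every hyperplane $H$, a degree-$d$ hypersurface inside $H\cong\mathbb{P}^{m-2}$ to which the inductive estimate applies, and one sums the incidences $\sum_{H}|X\cap H|=|X|\cdot|\mathbb{P}^{m-2}(\mathbb{F}_{q})|$ against the total number of hyperplanes. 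The delicate points are treating the hyperplanes through $X$ separately, exploiting the integrality of $|X|$ to absorb the gap between the relevant ratio of point-counts and $q$, and, when $d$ exceeds $q$, carrying S{\o}rensen's refinement of the estimate through the induction. Everything else — the length, and the two containments underlying the dimension count — is routine.
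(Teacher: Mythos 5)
The paper does not prove this theorem at all: it is imported with the single sentence ``The following theorem can be found in [3]'', so there is no internal argument to measure yours against, and what you have written is essentially the standard proof from the literature, carried out in more detail than the paper itself offers. The checkable parts of your outline are correct. The dimension count is complete: reduction modulo the $Y_i^q-Y_i$ sends a degree-$d$ monomial to a reduced monomial of degree $t$ with $0<t\le d$ and $t\equiv d\pmod{q-1}$; every such reduced monomial arises (multiply any variable with positive exponent by $Y_{i_0}^{\,d-t}$); reduced monomials are linearly independent as functions; and inclusion--exclusion over the events $e_i\ge q$ gives exactly the stated sum. The passage to the minimum distance is also sound: $F(\lambda P)=\lambda^{d}F(P)$ makes the support of the codeword a union of punctured lines, identifying the nonzero weights with $(q-1)$ times those of the projective Reed--Muller code on $\mathbb{P}^{m-1}(\mathbb{F}_q)$, and your extremal form $Y_{m-1}\widehat{g}$ does evaluate to a codeword of weight $(q-1)(q-s)q^{m-r-2}$ (the count uses $r\le m-2$, which you correctly extract from $d\le(m-1)(q-1)$). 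The only place the argument is not self-contained is the one you flag yourself: the lower bound $\delta_0\ge(q-s)q^{m-2-r}$ is precisely S{\o}rensen's theorem on projective Reed--Muller codes, and your concluding induction is a sketch rather than a proof --- the integrality/absorption step in the hyperplane-averaging argument and the refinement needed when $d>q$ are named but not carried out. Citing that theorem outright is entirely legitimate here (and no less rigorous than the paper's own citation of [3]); just be aware that if the intent is a self-contained proof, that cited bound is where all the remaining work lives.
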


\section{Description of the binary HRM codes in $A$}
First, we recall some results in the Proposition \ref{jenpol} for the special case $p=2$. In this section, we consider the ambiant space
\begin{equation*}
A=\mathbb{F}_{2}[X_{0},\ldots ,X_{m-1}]/(X_{0}^{2}-1,\ldots ,X_{m-1}^{2}-1).
\end{equation*}
We have
\begin{equation*}
B_{\underline{i}}(\underline{x})=(x_{0}-1)^{i_0}\ldots (x_{m-1}-1)^{i_{m-1}}=\sum_{\underline{j}\leq \underline{i}}H_{\underline{i}}(\underline{j})\underline{x}^{\underline{j}}
\end{equation*}
where $0\leq i_{k}\leq 1$, for all $k$,
\begin{equation*}
H_{\underline{i}}(\underline{Y}):=\prod_{l=0}^{m-1}H_{i_l}(Y_l)
\end{equation*}
and
\begin{equation*}
H_i(Y) = \alpha_i \prod_{j=1}^{1-i} (Y+j),
\end{equation*}
with $\alpha_i = - i! \mbox{ mod } 2$.\\
Note that $B_{(1,1,\ldots,1)}(\underline{x})=(x_{0}-1)^1\ldots (x_{m-1}-1)^1=\hat{1}$ is the "all one" word.\\
Let $d$ be an integer such that $0\leq d\leq m$. The $d$th order homogeneous Reed-Muller (HRM) codes of length $2^m$ over $\mathbb{F}_{2}$ is defined as
\begin{equation*}
HRMC_{d}(m,2):=\{(F(0),F(P_{1}),\ldots,F(P_{n}))\mid F\in \mathbb{F}_{2}[Y_{0},\ldots,Y_{m-1}]_{d}^{0}\},
\end{equation*}
where $n=2^m-1$.
We now give the description of the binary HRM code $HRMC_{d}(m,2)$ in $A$.
\begin{thm}
Let $d$ be an integer such that $1\leq d\leq m$. The set
\begin{equation*}
\{(x_{0}-1)^{i_0}\ldots (x_{m-1}-1)^{i_{m-1}}+\hat{1}\mid 0\leq i_{k}\leq 1,\quad m>\sum_{k=0}^{m-1}i_{k}\geq m-d\}
\end{equation*}
forms a linear basis for the binary HRM code $HRMC_{d}(m,2)$.
\end{thm}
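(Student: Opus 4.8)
The plan is to transport everything into $A$ through the isomorphism $\psi$ of (\ref{isom}), mirroring the proof of the weighted case. First I would rewrite the code inside $A$: since $\phi(F)=\phi(\overline F)$ for every $F$ and, under the identification $A\cong(\mathbb F_2)^{2^m}$, the map $\psi$ sends a reduced polynomial to its vector of values on $(\mathbb F_2)^m$, one has $HRMC_d(m,2)=\psi(W_d)$ where $W_d:=\overline{\mathbb F_2[Y_0,\ldots,Y_{m-1}]_d^0}$ is the (linear) space of reduced forms of homogeneous polynomials of degree $d$. The one structural fact to pin down is the description of $W_d$: because $Y_l^2\equiv Y_l$ modulo the defining ideal, a degree-$d$ monomial $Y_0^{a_0}\cdots Y_{m-1}^{a_{m-1}}$ reduces to the squarefree monomial $\prod_{a_l\ge 1}Y_l$, and conversely every squarefree monomial of degree $s$ with $1\le s\le d$ is the reduction of some degree-$d$ monomial; since $d\le m$, this shows $W_d$ is spanned by all squarefree monomials of degrees $1,\ldots,d$, so $\dim_{\mathbb F_2}HRMC_d(m,2)=\dim_{\mathbb F_2}W_d=\sum_{s=1}^{d}\binom{m}{s}$.

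Next I would check linear independence of the proposed family. Writing $\mathbf 1=(1,\ldots,1)$, the constraint $m>\sum_k i_k$ means every multi-index $\underline i$ that occurs satisfies $\underline i\ne\mathbf 1$, while $\hat 1=B_{\mathbf 1}(\underline x)$. A relation $\sum c_{\underline i}\bigl(B_{\underline i}(\underline x)+\hat 1\bigr)=0$ becomes $\sum c_{\underline i}B_{\underline i}(\underline x)+\bigl(\sum c_{\underline i}\bigr)B_{\mathbf 1}(\underline x)=0$, a relation among pairwise distinct elements of the Jennings basis $B$, which forces all $c_{\underline i}=0$.

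For membership I would invoke Proposition \ref{jenpol} for $p=2$: $B_{\underline i}(\underline x)=\psi(H_{\underline i})$ with $H_{\underline i}(\underline Y)=\prod_{l:\,i_l=0}(Y_l+1)$, and $\hat 1=\psi(1)$. In characteristic $2$ the constant terms cancel, so $B_{\underline i}(\underline x)+\hat 1=\psi(H_{\underline i}+1)=\psi\bigl(\sum_{\emptyset\ne T\subseteq Z_{\underline i}}\prod_{l\in T}Y_l\bigr)$, where $Z_{\underline i}=\{l:i_l=0\}$ has cardinality $m-\sum_k i_k$. Thus $H_{\underline i}+1$ is a sum of squarefree monomials of degree between $1$ and $m-\sum_k i_k$, and the hypothesis $\sum_k i_k\ge m-d$ makes all those degrees $\le d$, so $H_{\underline i}+1\in W_d$ and hence $B_{\underline i}(\underline x)+\hat 1\in HRMC_d(m,2)$. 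This step is where being over $\mathbb F_2$ is essential: over a larger field the analogous element would retain a nonzero constant term and would not be homogeneous modulo the ideal.

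It then remains to count: the multi-indices with $m>\sum_k i_k\ge m-d$ number $\sum_{s=m-d}^{m-1}\binom{m}{s}=\sum_{t=1}^{d}\binom{m}{t}=\dim_{\mathbb F_2}HRMC_d(m,2)$, so the linearly independent family above, lying inside the code and having the correct size, is a basis. I expect the \emph{main obstacle} to be nothing deep but rather the bookkeeping in the first step — correctly identifying $W_d$ (exactly which squarefree monomials arise as reductions of homogeneous degree-$d$ polynomials) and hence $\dim HRMC_d(m,2)$; once that is in place, the argument is a direct transcription of the one used for $WRMC_\omega(m,p)$, with the twist that one works with $B_{\underline i}(\underline x)+\hat 1$ rather than with $B_{\underline i}(\underline x)$.
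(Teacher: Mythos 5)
Your proposal is correct and follows essentially the same route as the paper: expand $B_{\underline i}(\underline x)+\hat 1$ via Proposition \ref{jenpol} into $\psi(H_{\underline i}+1)$ with $H_1(Y)=1$, $H_0(Y)=Y+1$, observe that $H_{\underline i}+1$ is a sum of squarefree monomials of degree between $1$ and $m-\sum_k i_k\le d$ (hence lies in the reduced homogeneous space), and match cardinalities via the complementation bijection $\underline i\mapsto(1-i_0,\ldots,1-i_{m-1})$. The only difference is that you also verify linear independence of the shifted family directly from the Jennings basis (using that $\underline i\ne(1,\ldots,1)$ for all members), a step the paper leaves implicit, relying only on membership plus the cardinality count.
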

\begin{proof}
Let $d$ such that $1\leq d\leq m$.\\
Consider the element
\begin{equation*}
B_{\underline{i}}(\underline{x})+\hat{1}=(x_{0}-1)^{i_0}\ldots (x_{m-1}-1)^{i_{m-1}}+\hat{1},
\end{equation*}
where $0\leq i_{k}\leq 1$ for all $k$ and $m>\sum_{k=0}^{m-1}i_{k}\geq m-d$.\\
Set $D(\underline{i}):=\{\underline{j}\in(\{0,1\})^{m}\mid \underline{j}\leq \underline{i}\}$ and $C(\underline{i}):=(\{0,1\})^{m}- D(\underline{i})$.\\
We have $H_{\underline{i}}(\underline{j})=0$ for $\underline{j}\in C(\underline{i})$.\\
Thus 
\begin{equation*}
B_{\underline{i}}(\underline{x})=\sum_{\underline{j}\leq \underline{i}}H_{\underline{i}}(\underline{j})\underline{x}^{\underline{j}}=\sum_{\underline{j}\in (\{0,1\})^m}H_{\underline{i}}(\underline{j})\underline{x}^{\underline{j}}\quad.
\end{equation*}
We have
\begin{equation*}
H_{\underline{i}}(\underline{Y}):=\prod_{l=0}^{m-1}H_{i_l}(Y_l),
\end{equation*}
where
\begin{equation}\label{polybase}
H_1(Y) = 1, H_0(Y) = Y+1.
\end{equation}
Since $\sum_{k=0}^{m-1}i_{k}\geq m-d$, then $B_{\underline{i}}(\underline{x})\in P^{m-d}$ where $P$ is the radical of the modular algebra $A$.\\
And since $P^{m-d}=C_{d}(m,2)$, then $H_{\underline{i}}(\underline{Y})\in P_{d}(m,2)$ where $P_{d}(m,2)$ is a linear space generated by the set
\begin{equation}\label{multipoly}
\{Y_0^{i_0}\ldots Y_{m-1}^{i_{m-1}}\mid 0\leq i_{k}\leq 1,0\leq \sum_{k=0}^{m-1}i_{k}\leq d\}.
\end{equation}
We have
\begin{equation*}
B_{\underline{i}}(\underline{x})+\hat{1}=\sum_{\underline{j}\in (\{0,1\})^m}(H_{\underline{i}}(\underline{j})+1)\underline{x}^{\underline{j}}\quad.
\end{equation*}
By (\ref{polybase}) and (\ref{multipoly}), we have $H_{\underline{i}}(\underline{Y})+1\in \mathbb{F}_{2}[Y_{0},\ldots,Y_{m-1}]_{d}^{0}$.\\
Note that $\mathbb{F}_{2}[Y_{0},\ldots,Y_{m-1}]_{d}^{0}$ is a linear space generated by the set
\begin{equation*}
S:=\{Y_0^{i_0}\ldots Y_{m-1}^{i_{m-1}}\mid 0\leq i_{k}\leq 1,0< \sum_{k=0}^{m-1}i_{k}\leq d\}.
\end{equation*}
Thus $B_{\underline{i}}(\underline{x})+\hat{1}\in HRMC_{d}(m,2)$.\\
Note also that $\sum_{k=0}^{m-1}i_{k}=m$ if and only if $i_{k}=1$ for all $k=0,\ldots,m-1$.\\
Set $R:=\{(x_{0}-1)^{i_0}\ldots (x_{m-1}-1)^{i_{m-1}}+\hat{1}\mid 0\leq i_{k}\leq 1, m>\sum_{k=0}^{m-1}i_{k}\geq m-d\}$.\\
We will show that $\dim_{\mathbb{F}_{2}}(HRMC_{d}(m,2))=\card(R)$.\\
We have $\dim_{\mathbb{F}_{2}}(HRMC_{d}(m,2))=\dim_{\mathbb{F}_{2}}(\mathbb{F}_{2}[Y_{0},\ldots,Y_{m-1}]_{d}^{0})=\card(S)$.\\
Consider the bijection
\begin{equation*}
\begin{aligned}
\beta :\quad\quad (\{0,1\})^{m} &\longrightarrow (\{0,1\})^{m}\\
         (i_{0},\ldots,i_{m-1})&\longmapsto (1-i_{0},\ldots,1-i_{m-1})
\end{aligned}
\end{equation*}
Set $R':=\{\underline{i}=(i_{0},\ldots,i_{m-1})\in (\{0,1\})^{m}\mid \sum_{k=0}^{m-1}i_{k}\geq m-d\}$\\
and $S':=\{\underline{i}=(i_{0},\ldots,i_{m-1})\in (\{0,1\})^{m}\mid \sum_{k=0}^{m-1}i_{k}\leq d\}$.\\
It is clear that $S'=\beta(R')$. Thus $\card(R')=\card(S')$.\\
Since $\card(R)=\card(R')-1$ and $\card(S)=\card(S')-1$, then $\card(R)=\card(S)$.
\end{proof}

\section{Decoding procedure for the binary HRM codes}
In this section, we will follow Landrock-Manz as in [9].\\
Let $d$ be an integer such that $1\leq d\leq m$. The HRM code $HRMC_{d}(m,2)$ is of type $\left[2^m,\sum_{t=1}^{d}\binom{m}{t},2^{m-d}\right]$ over $\mathbb{F}_{2}$.\\
Set $b(\{i_{1},\ldots,i_{t}\}):=(x_{i_{1}}-1)\ldots(x_{i_{t}}-1)$,\\
where $\{i_{1},\ldots,i_{t}\}\subseteq\{0,1,\ldots,m-1\}$.\\
$B_{m-d}:=\{b(\eta)+\hat{1}\mid \eta\subseteq \{0,1,\ldots,m-1\},m > \card(\eta)\geq m-d\}$ is a linear basis of $HRMC_{d}(m,2)$.\\
General results of the following Proposition can be found in [2].
\begin{prop}\label{Jennprop}
We have
\begin{enumerate}
\item $b(\{\})=1$.
\item \begin{equation*}
b(\eta).b(\kappa)=\begin{cases}
         0& \text{if $\eta\cap\kappa\neq\{\}$},\\
         b(\eta\cup\kappa)& \text{otherwise}.
\end{cases}
\end{equation*}
\item The weight of the codeword $w(b(\{i_{1},\ldots,i_{t}\}))=2^t$.
\item $b(\{0,1,\ldots,m-1\})=\hat{1}$ the "all one" word.
\item $\hat{1}.b(\{\eta\})=0$ if $\eta\neq \{\}$.
\end{enumerate}
\end{prop}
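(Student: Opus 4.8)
The plan is to derive all five statements from one elementary identity in $A$, namely $(x_i-1)^2=0$, together with the explicit expansion of $b(\eta)$ in the monomial basis. Since $x_i^2=1$ in $A$ and the base field has characteristic $2$, we have $(x_i-1)^2=x_i^2-2x_i+1=x_i^2+1=1+1=0$. Statement (1) is then immediate, as $b(\{\})$ is the empty product and equals $1$ by convention.

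For (2) I would split into two cases. If $\eta\cap\kappa\neq\{\}$, choose $i\in\eta\cap\kappa$; then $(x_i-1)$ occurs as a factor of both $b(\eta)$ and $b(\kappa)$, and since $A$ is commutative the product $b(\eta)b(\kappa)$ may be regrouped so that it is divisible by $(x_i-1)^2=0$, hence $b(\eta)b(\kappa)=0$. If $\eta\cap\kappa=\{\}$, commutativity gives $b(\eta)b(\kappa)=\prod_{k\in\eta}(x_k-1)\prod_{k\in\kappa}(x_k-1)=\prod_{k\in\eta\cup\kappa}(x_k-1)=b(\eta\cup\kappa)$.

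For (3) I would expand $b(\{i_1,\ldots,i_t\})=\prod_{s=1}^{t}(x_{i_s}-1)$ by choosing from each factor either $x_{i_s}$ or the constant $-1=1$; over $\mathbb{F}_2$ all signs disappear, so $b(\{i_1,\ldots,i_t\})=\sum_{T\subseteq\{i_1,\ldots,i_t\}}\prod_{k\in T}x_k$. Each $T$ yields a distinct monomial $\prod_{k\in T}x_k$ of the basis $\{x_0^{j_0}\ldots x_{m-1}^{j_{m-1}}\mid 0\le j_k\le 1\}$ of $A$ (no exponent exceeds $1$, so no reduction modulo $I$ is needed), and there are $2^t$ such subsets. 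Under the identification of $A$ with $(\mathbb{F}_2)^{2^m}$ in which these monomials are the standard basis vectors, $b(\{i_1,\ldots,i_t\})$ therefore has exactly $2^t$ nonzero coordinates, i.e. weight $2^t$. (Alternatively one could substitute $H_1(Y)=1$, $H_0(Y)=Y+1$ into Proposition \ref{jenpol} and count the $\underline{j}$ with $H_{\underline{i}}(\underline{j})\neq 0$, but the direct expansion is shorter.)

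Statement (4) is the case $t=m$ of the expansion in (3): the sum then ranges over all $T\subseteq\{0,1,\ldots,m-1\}$, so $b(\{0,1,\ldots,m-1\})$ is the sum of every basis monomial of $A$, which is the all-one word $\hat{1}$ (equivalently, by (3) it has full weight $2^m$, and over $\mathbb{F}_2$ a word of full Hamming weight is $\hat{1}$). Finally (5) follows by combining (4) and (2): for $\{\}\neq\eta\subseteq\{0,\ldots,m-1\}$ we get $\hat{1}\cdot b(\eta)=b(\{0,\ldots,m-1\})\cdot b(\eta)=0$, since $\{0,\ldots,m-1\}\cap\eta=\eta\neq\{\}$. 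The whole argument is routine; the only place deserving a little care is (3), where one must check that the $2^t$ monomials produced are genuinely distinct basis elements and then pass through the identification $A\cong(\mathbb{F}_2)^{2^m}$ to read off the Hamming weight.
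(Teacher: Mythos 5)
Your proof is correct and complete. The paper itself gives no proof of this proposition (it simply defers to reference [2]), so there is nothing to compare against; your self-contained argument --- deriving everything from $(x_i-1)^2=0$ in characteristic $2$ and the expansion $b(\{i_1,\ldots,i_t\})=\sum_{T\subseteq\{i_1,\ldots,i_t\}}\prod_{k\in T}x_k$ over $\mathbb{F}_2$ --- is exactly the standard route, and you correctly handle the one point needing care, namely that the $2^t$ monomials in the expansion are distinct reduced basis elements so the Hamming weight can be read off directly.
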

\noindent Set $\eta^c:=\{0,1,\ldots,m-1\}-\eta$.\\
Let $c\in HRMC_{d}(m,2)$ be a transmitted codeword and $v\in A$ the received vector, where
\begin{equation*}
A=\mathbb{F}_{2}[X_{0},\ldots ,X_{m-1}]/(X_{0}^{2}-1,\ldots ,X_{m-1}^{2}-1)
\end{equation*}
Since $HRMC_{d}(m,2)$ is $(2^{m-d-1}-1)$-error correcting, we write $v=c+f$ with $w(f)\leq 2^{m-d-1}-1$.\\
We have
\begin{equation*}
c=\sum_{m-d\leq\card(\eta)<m,\; \eta\subseteq \{0,1,\ldots,m-1\}}\tau(\eta)(b(\eta)+\hat{1})
\end{equation*}
with $\tau(\eta)\in \mathbb{F}_{2}$.\\
We now present the decoding procedure to determine the coefficients $\tau(\eta)$.
\underline{Step1:}\\
Let $\kappa$ be a subset of $\{0,1,\ldots,m-1\}$ such that $\card(\kappa)=m-d$. We have
\begin{equation*}
\begin{aligned}
v.b(\kappa^c)&=(c+f).b(\kappa^c)\\
&=(\sum_{m-d\leq\card(\eta)<m,\; \eta\subseteq \{0,1,\ldots,m-1\}}\tau(\eta)(b(\eta)+\hat{1})+f).b(\kappa^c)\\
&=\sum_{m-d\leq\card(\eta)<m,\; \eta\subseteq \{0,1,\ldots,m-1\}}\tau(\eta)(b(\eta).b(\kappa^c)+\hat{1}.b(\kappa^c))+f.b(\kappa^c)\\
&=\tau(\kappa).\hat{1}+f.b(\kappa^c)
\end{aligned}
\end{equation*}
We have
$w(f.b(\kappa^c))\leq w(f).w(b(\kappa^c))\leq (2^{m-d-1}-1).2^d=2^{m-1}-2^d<2^{m-1}=\frac{1}{2}2^m$.\\
Then it is easy to see that
\begin{center}
$\tau(\kappa)=0$ if and only if $w(v.b(\kappa^c))<2^{m-1}$.
\end{center}
We next subtract $\tau(\kappa).(b(\kappa)+\hat{1})$ from $v$ and obtain $v'=v+\tau(\kappa).(b(\kappa)+\hat{1})=c'+f$ where $c'=c+\tau(\kappa).(b(\kappa)+\hat{1}) \in HRMC_{d}(m,2)$.\\
Consider another set $\eta$ of cardinality $\card(\eta)=m-d$, multiply $v'$ by $b(\eta^c)$ and find so $\tau(\eta)$. Having eventually run through all sets of cardinality $m-d$, we end up with $v''=c''+f$, where $c''\in HRMC_{d-1}(m,2)$.\\
\underline{Step2:}\\
We now fix a set $\kappa$ of cardinality $m-d+1$ and by using the same technique as in the first step, we can find the coefficient $\tau(\kappa)$. We repeat the same treatment for all set $\eta$ of cardinality $m-d+1$. We eventually determine
\begin{equation*}
c=\sum_{m-d\leq\card(\eta)<m,\; \eta\subseteq \{0,1,\ldots,m-1\}}\tau(\eta)(b(\eta)+\hat{1}).
\end{equation*}
If another step is needed, we must pick a set $\kappa$ of cardinality $m-d+2$ and determine $\tau(\kappa)$, and we continue in this way.

\section{An example}
Consider the binary HRM code $HRMC_{1}(5,2)$. This code is of type $\left[32,5,16\right]$.\\
Set $\widetilde{E_{l}}=\prod_{k\neq l}(x_{k}-1)=b(\{l\}^c)$, where $l\in \{0,1,2,3,4\}$.\\
The set $\{\widetilde{E_{l}}+\hat{1}\mid l=0,1,2,3,4\}$ forms a linear basis for $HRMC_{1}(5,2)$.\\
Let $c=\sum_{i=0}^{4}\tau_{i}.(\widetilde{E_{i}}+\hat{1})$ be a transmitted codeword with $\tau_{i}\in\mathbb{F}_{2}$ and $v\in A=\mathbb{F}_{2}[X_{0},X_{1},\ldots ,X_{4}]/(X_{0}^{2}-1,\ldots ,X_{4}^{2}-1)$ the received vector. Since $HRMC_{1}(5,2)$ is $7$-error correcting, we write $v=c+f$ with $w(f)\leq 7$.\\
We have\\
$v.(x_j-1)=(c+f).(x_j-1)=c.(x_j-1)+f.(x_j-1)=\tau_j.\hat{1}+f.(x_j-1)$ for $j=0,1,2,3,4$.\\
Since $w(f.(x_j-1))\leq w(f).w((x_j-1))\leq 7*2=14<16$, then we have the following Proposition
\begin{prop}\label{coeftest}
\begin{equation*}
\tau_j=0 \quad if \;and\; only\; if \quad w(v.(x_j-1))<16
\end{equation*}
\end{prop}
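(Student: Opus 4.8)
The plan is to read off the proposition directly from the identity
$v.(x_j-1)=\tau_j.\hat 1+f.(x_j-1)$
that precedes its statement, together with the weight bound $w(f.(x_j-1))\le 14<16$. So the argument is really a short case analysis on the value of $\tau_j\in\mathbb F_2$, and the key input is the weight of the all‑one word $\hat 1$, which is $32$ by part (4) of Proposition \ref{Jennprop} (with $m=5$), and the fact that adding a low‑weight vector to $\hat 1$ cannot bring its weight below $16$.

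First I would fix $j\in\{0,1,2,3,4\}$ and recall the computation already done in the text: since $c=\sum_{i=0}^4\tau_i(\widetilde{E_i}+\hat 1)$ and $\widetilde{E_i}=b(\{i\}^c)$, multiplying by $x_j-1=b(\{j\})$ and applying Proposition \ref{Jennprop}(2) gives $b(\{i\}^c).b(\{j\})=0$ whenever $i\ne j$ (because then $j\in\{i\}^c$, so the sets meet) and $=b(\{0,1,2,3,4\})=\hat 1$ when $i=j$; also $\hat 1.b(\{j\})=0$ by Proposition \ref{Jennprop}(5). Hence $c.(x_j-1)=\tau_j.\hat 1$ and therefore $v.(x_j-1)=\tau_j.\hat 1+f.(x_j-1)$.

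Next I would split into the two cases. If $\tau_j=0$, then $v.(x_j-1)=f.(x_j-1)$, so $w(v.(x_j-1))=w(f.(x_j-1))\le w(f)\cdot w(x_j-1)\le 7\cdot 2=14<16$. If $\tau_j=1$, then $v.(x_j-1)=\hat 1+f.(x_j-1)$; writing $g:=f.(x_j-1)$ with $w(g)\le 14$, and using that over $\mathbb F_2$ adding $g$ flips exactly $w(g)$ coordinates of $\hat 1$, we get $w(\hat 1+g)=32-w(g)\ge 32-14=18\ge 16$. Combining the two cases yields: $w(v.(x_j-1))<16$ iff $\tau_j=0$, which is exactly the claim.

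The only point needing a little care — and the part I would flag as the real (though mild) obstacle — is the inequality $w(\hat 1+g)\ge 16$ in the case $\tau_j=1$: one must note that $\hat 1$ has full weight $2^m=32$ and that $w(\hat 1+g)=2^m-w(g)$ holds because $g$ is supported on coordinates where $\hat 1$ is $1$ (indeed $\hat 1$ is the all‑one word, so every coordinate is $1$), so flipping $w(g)\le 14$ of them leaves at least $18$ ones. Everything else is the routine submultiplicativity of the weight under the product in $A$ and the two‑case bookkeeping, so I would keep the write‑up to a few lines.
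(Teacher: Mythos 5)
Your proposal is correct and follows essentially the same route as the paper: the identity $v.(x_j-1)=\tau_j.\hat{1}+f.(x_j-1)$ together with $w(f.(x_j-1))\leq 14<16$, then the two-case comparison $w\leq 14$ versus $w=32-w(f.(x_j-1))\geq 18$. Your explicit justification of $c.(x_j-1)=\tau_j.\hat{1}$ via Proposition \ref{Jennprop} and of $w(\hat{1}+g)=32-w(g)$ only makes explicit what the paper leaves as "it is easy to see."
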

By multiplying $v$ with $(x_j-1)$ for $j=0,1,2,3,4$ and utilizing the Proposition \ref{coeftest}, we obtain the coefficients $\tau_0,\tau_1,\tau_2,\tau_3,\tau_4$ of $c$.


\begin{thebibliography}{99}
\bibitem{1}
{H. Andriatahiny}, {\textit{The Generalized Reed-Muller codes and the Radical Powers of a modular algebra}}, {British Journal of Mathematics and Computer Science}, {18(5):1-14}, {2016}.
\bibitem{2}
{E. F. Assmus and J. D. Key}, {\textit{Polynomial codes and finite geometries}}, {Handbook of Coding Theory}, {1994}.
\bibitem{3}
{T.P. Berger}, {\textit{Automorphism Groups of Homogeneous and Projective Reed-Muller Codes}}, {IEEE Transactions on Information Theory} {48}, {no.5}:{1035-1045}, {2002}.
\bibitem{4}
{S. D. Berman}, {\textit{On the theory of group codes}}, {Kibernetika} {3(1)}:{31-39}, {1967}.
\bibitem{5}
{P. Charpin}, {\textit{Une généralisation de la construction de Berman des codes de Reed et Muller p-aires}}, {Communications in algebra} {16}:{2231-2246}, {1988}.
\bibitem{6}
{E. Couselo, S. Gonzalez, VT. Markov, C. Martinez, AA. Nechaev}, {\textit{Ideal representation of Reed-Solomon and Reed-Muller codes}}, {Algebra and Logic}, {vol.51}, {no.3}, {2012}.
\bibitem{7}
{S.A. Jennings}, {\textit{The structure of the group ring of a p-group over a modular field}}, {Trans.Amer.Math.Soc.}, {50:175-185}, {1941}.
\bibitem{8}
{T. Kasami, S. Lin and W. W. Peterson}, {\textit{New generalizations of the Reed-Muller codes}}, {IEEE Transactions on Information Theory} {14} , {no.2}:{189-205}, {1968}.
\bibitem{9}
{Landrock P. and Manz O.}, {\textit{Classical Codes as Ideals in Group Algebras}}, {Designs, Codes and Cryptography}, {2}:{273-285}, {1992}.
\bibitem{10}
{O. Moreno, I.M. Duursma, J.P. Cherdieu, A. Edouard}, {\textit{Cyclic Subcodes of Generalized Reed-Muller Codes}},  {IEEE Transactions on Information Theory} {44}, {no.1}:{307-311}, {1998}.
\bibitem{11}
{A.B. Sorensen}, {\textit{Weighted Reed-Muller Codes and Algebraic-Geometric Codes}}, {IEEE Transactions on Information Theory} {38}, {no.6}:{1821-1826}, {1992}.
\bibitem{12}
{J.H. Van Lint}, {\textit{Introduction to Coding Theory}}, {3rd ed.}, {Springer}, {1998}.
\end{thebibliography}
\end{document}